\documentclass[showpacs,amsmath,12pt,prd,onecolumn,]{revtex4-2}
\usepackage{graphicx}
\usepackage{latexsym}
\usepackage{tikz}
\usepackage{epsfig}
\usepackage{amssymb}
\usetikzlibrary{positioning}
\usepackage{color}
\usepackage{hyperref}
\hypersetup{colorlinks,citecolor=blue}
\usepackage{multirow}
\usepackage{amsthm}
\newtheorem{proposition}{Proposition}

\usepackage[utf8]{inputenc}
\usepackage[T1]{fontenc}
\usetikzlibrary{positioning}

\preprint{}

\begin{document}
\title{Energy–Momentum Surfaces: A Differential Geometric Framework for Dispersion Relations}		
\author{Ginés R. Pérez Teruel}
\email{gines.landau@gmail.com}
\affiliation{Consellería de Educación, Universidades y Empleo. Ministerio de Educación y Formación Profesional, Spain.}
\begin{abstract}
We propose a geometric framework where dispersion relations are viewed as parametric surfaces in energy–momentum space. Within this picture, the presence and type of critical points of the surface emerge as clear geometric signatures of kinematical restrictions. The Newtonian relation corresponds to a developable surface with no critical points, reflecting the absence of invariant limits. Special Relativity generates a saddle point and globally negative curvature, encoding the universal light cone. Modified dispersion relations may introduce additional critical points, signaling new invariant energy scales or thresholds. This unifying approach not only recasts known results in a transparent geometric language but also provides a simple diagnostic tool for exploring departures from Lorentz invariance and their physical implications.\\
\textbf{Keywords:} Special Theory of Relativity, Dispersion Relation, Lorentz Invariance Violation

\end{abstract}
\maketitle
\section{Introduction}
The Special Theory of Relativity (SR) is one of the cornerstones of modern science. Since its publication in 1905, the predictions of the theory have been successfully tested in a large number of experiments, although scientists still look for possible departures that could reveal the presence of new physics, particularly in the photon sector, with observations of cosmic rays or ultra energetic gamma ray bursts. A detailed compilation of the plethora of experimental tests performed, and the results obtained so far, can be found in \cite{Wu,Kos0}. It is well known that SR, presented in its original form by Einstein, was devoid of a sophisticated mathematical structure. Indeed, its rich underlying geometrical order was only discovered after the seminal works of Hermann Minkowski in 1908 \cite{Min1,Min2}, where it was proved that the theory had a deeper geometrical significance than Einstein himself thought. In this sense, the crucial discovery that space and time are no longer independent entities, but an indivisible part of a four-dimensional continuum, paved the way for the development of General Relativity (GR) \cite{Renn}.

Possible violations of Lorentz invariance have been widely considered in the context of quantum gravity theories, which often predict departures from SR at the Planck scale \cite{Kos1,Kos2,Kos3,Matt,Bluhm,Amel,Tass,Wei,AmelinoCamelia2008}. These departures are usually encoded in modified dispersion relations (MDR), providing an effective phenomenological framework where new scales or thresholds may appear \cite{Jacobson2003,Myers}. Several approaches have explored their geometric meaning, ranging from Finsler geometry to the proposal of Relative Locality \cite{Girelli2007,AmelinoCamelia2011,Liberati2013}, and including the so-called Gravity’s Rainbow \cite{GR}, in which MDR are associated with energy-dependent deformations of the spacetime metric. 

In this work we adopt a different but complementary viewpoint. We revisit the dispersion relation of SR and its possible modifications by treating them as parametric surfaces in energy–momentum space. This construction allows one to analyze their local and global properties with the standard tools of differential geometry, such as tangent spaces, metric tensors, curvature, and geodesics. The relativistic dispersion relation is then reinterpreted as a Lorentzian hyperbolic paraboloid, while Newtonian and modified relations give rise to surfaces with qualitatively different geometry. Within this framework, a central observation emerges: the existence and type of critical points of the surface act as geometric signatures of fundamental kinematical restrictions. The Newtonian case leads to a developable surface with no critical points, consistently with the absence of invariant limits. Relativity, in contrast, introduces a saddle point at the origin together with strictly negative Gaussian curvature, reflecting the emergence of a universal causal structure. Modified dispersion relations can generate additional critical points, which correspond to new invariant momentum or energy scales where the geometry of the mass shell undergoes a qualitative change. 

Thus, while the geometry of the relativistic mass shell is well known, the novelty of our contribution lies in providing a unified geometric perspective where Newtonian, relativistic, and modified dispersion relations are compared on equal footing, and in identifying critical points as simple yet robust geometric markers of the fundamental restrictions that underlie each kinematical framework.
\section{Energy--momentum parametric surfaces}
In order to treat dispersion relations within a differential--geometric framework, we introduce the notion of an energy--momentum parametric surface.  
Formally, a surface is a smooth map
\begin{equation}
\mathbf{r}: U\subseteq \mathbb{R}^{2}\longrightarrow \mathbb{R}^{3},
\label{eq:surface_general}
\end{equation}
defined by two parameters, the energy $E$ and the momentum $p$, with local coordinates
\begin{equation}
\mathbf{r}(E,p)=\big(x(E,p),\,y(E,p),\,z(E,p)\big).
\label{eq:parametric_general}
\end{equation}
Here $x(E,p)$, $y(E,p)$ and $z(E,p)$ are assumed to be continuous and differentiable functions.  

At any point $(E_{0},p_{0})\in U$, the differential of $\mathbf{r}$ is the linear map
\begin{equation}
d\mathbf{r}_{(E_{0},p_{0})}:\mathbb{R}^{2}\to \mathbb{R}^{3},
\label{eq:differential}
\end{equation}
represented by the Jacobian matrix. Its columns are the tangent vectors
\begin{align}
\mathbf{r}_{E}(E_{0},p_{0})&=\frac{\partial \mathbf r}{\partial E}(E_{0},p_{0})
=\big(D_{E}x,\,D_{E}y,\,D_{E}z\big), 
\label{eq:rE_general}\\
\mathbf{r}_{p}(E_{0},p_{0})&=\frac{\partial \mathbf r}{\partial p}(E_{0},p_{0})
=\big(D_{p}x,\,D_{p}y,\,D_{p}z\big).
\label{eq:rp_general}
\end{align}
A point $P=\mathbf{r}(E_{0},p_{0})$ is called \emph{regular} if $\mathrm{rank}\, d\mathbf r_{(E_{0},p_{0})}=2$, i.e. if $\mathbf r_{E}$ and $\mathbf r_{p}$ are linearly independent. In that case the tangent plane $T_{P}S$ is uniquely defined. Otherwise, $P$ is a singular point.

For our purposes, it is convenient to restrict to surfaces of the form
\begin{equation}
\mathbf{r}(E,p)=\big(E,\,p,\,f(E,p)\big),
\label{eq:surface_subclass}
\end{equation}
where the first two coordinates coincide with the physical variables.  
Then the tangent vectors reduce to
\begin{equation}
\mathbf{r}_{E}=(1,0,f_{E}),\qquad 
\mathbf{r}_{p}=(0,1,f_{p}),
\label{eq:rE_rp_reduced}
\end{equation}
with $f_{E}=\partial f/\partial E$, $f_{p}=\partial f/\partial p$.  
The cross product
\begin{equation}
\mathbf n=\mathbf r_{E}\times \mathbf r_{p}=
\big(-f_{E},\,-f_{p},\,1\big)
\label{eq:normal_vector}
\end{equation}
is nonzero everywhere, ensuring that these energy--momentum surfaces are regular by construction and that $T_{P}S$ is always well defined.
As a first illustration we apply this construction to the Newtonian dispersion relation $E=\displaystyle \frac{p^2}{2m}$. In this case the associated surface turns out to be developable, with vanishing intrinsic curvature and no critical points. This simple example will provide a useful baseline to contrast with the relativistic and modified dispersion relations discussed in the following sections.
\section{The Newtonian energy--momentum surface}
\label{sec:newtonian}

As a first illustration, consider the Newtonian dispersion relation
\begin{equation}
E=\frac{p^{2}}{2m}.
\label{eq:newton_disp}
\end{equation}
In our framework this defines the parametric surface
\begin{equation}
\mathbf r(E,p)=\big(E,\,p,\,f_{\mathrm N}(E,p)\big),
\qquad
f_{\mathrm N}(E,p)=E-\frac{p^{2}}{2m}.
\label{eq:newton_surface}
\end{equation}

\subsection{Tangent and normal vectors}
The tangent vectors are
\begin{equation}
\mathbf r_{E}=(1,0,f_{E})=(1,0,1),
\qquad
\mathbf r_{p}=(0,1,f_{p})=\Big(0,1,-\frac{p}{m}\Big),
\label{eq:newton_tangent}
\end{equation}
and the (non-unit) normal vector follows from the cross product
\begin{equation}
\mathbf n=\mathbf r_{E}\times \mathbf r_{p}
=\Big(-1,\,\frac{p}{m},\,1\Big).
\label{eq:newton_normal}
\end{equation}
We will also use
\begin{equation}
W:=\sqrt{1+f_{E}^{2}+f_{p}^{2}}
=\sqrt{\,2+\frac{p^{2}}{m^{2}}\,}.
\label{eq:newton_W}
\end{equation}

\subsection{First and second fundamental forms}
The induced metric (first fundamental form) has coefficients
\begin{align}
g_{EE}&=\mathbf r_{E}\!\cdot\!\mathbf r_{E}=1+f_{E}^{2}=2, 
\label{eq:newton_gEE}\\
g_{Ep}&=\mathbf r_{E}\!\cdot\!\mathbf r_{p}=f_{E}f_{p}
=-\frac{p}{m},
\label{eq:newton_gEp}\\
g_{pp}&=\mathbf r_{p}\!\cdot\!\mathbf r_{p}=1+f_{p}^{2}
=1+\frac{p^{2}}{m^{2}}.
\label{eq:newton_gpp}
\end{align}
In matrix form, the first fundamental form is
\begin{equation}
\mathrm{I}_{\mathrm N}=
\begin{pmatrix}
g_{EE} & g_{Ep}\\[2pt]
g_{pE} & g_{pp}
\end{pmatrix}
=
\begin{pmatrix}
2 & -\dfrac{p}{m}\\[8pt]
-\dfrac{p}{m} & 1+\dfrac{p^{2}}{m^{2}}
\end{pmatrix},
\qquad
\det g_{\mathrm N}=2+\frac{p^{2}}{m^{2}}.
\label{eq:newton_first_form_matrix}
\end{equation}

Since $f_{EE}=0$, $f_{Ep}=0$, $f_{pp}=-1/m$, the second fundamental form
\begin{equation}
h_{ij}=\mathbf r_{ij}\cdot \hat{\mathbf n},
\qquad
\hat{\mathbf n}=\frac{\mathbf n}{\|\mathbf n\|}=\frac{\mathbf n}{W},
\label{eq:newton_secondform_def}
\end{equation}
has coefficients
\begin{equation}
L=\frac{f_{EE}}{W}=0,\qquad
M=\frac{f_{Ep}}{W}=0,\qquad
N=\frac{f_{pp}}{W}=-\frac{1}{m\,W}.
\label{eq:newton_II}
\end{equation}
Thus, in matrix form,
\begin{equation}
\mathrm{II}_{\mathrm N}=
\frac{1}{\sqrt{\,2+\dfrac{p^{2}}{m^{2}}\,}}
\begin{pmatrix}
0 & 0\\[6pt]
0 & -\dfrac{1}{m}
\end{pmatrix}.
\label{eq:newton_second_form_matrix}
\end{equation}

\subsection{Curvatures}
For a graph $z=f(E,p)$ one has
\begin{equation}
K=\frac{f_{EE}f_{pp}-f_{Ep}^{2}}{\big(1+f_{E}^{2}+f_{p}^{2}\big)^{2}},
\qquad
H=\frac{(1+f_{p}^{2})f_{EE}-2 f_{E}f_{p}f_{Ep}+(1+f_{E}^{2})f_{pp}}
{2\big(1+f_{E}^{2}+f_{p}^{2}\big)^{3/2}}.
\label{eq:KH_general_graph}
\end{equation}
Where $K$, $H$ are the Gaussian curvature, and the mean curvature, respectively.
Substituting the different values for $f_{\mathrm N}$, yields
\begin{equation}
K_{\mathrm N}(E,p)=0,
\qquad
H_{\mathrm N}(E,p)=-\frac{1}{\,m\left(2+\frac{p^{2}}{m^{2}}\right)^{3/2}}<0.
\label{eq:newton_KH}
\end{equation}
Thus the Newtonian surface is \emph{developable} (vanishing Gaussian curvature), curved in one direction only.

\subsection{Critical points and discussion}
Critical points are defined by $f_{E}=0$ and $f_{p}=0$. Here
\begin{equation}
f_{E}=1\ \ (\text{never }0), 
\qquad
f_{p}=-\frac{p}{m},
\label{eq:newton_grad}
\end{equation}
so there are \emph{no} critical points. This neatly mirrors the absence of universal kinematical restrictions in Newtonian theory: the energy--momentum surface is intrinsically flat ($K=0$) and exhibits no special points where both first derivatives vanish. In the next section we contrast this baseline with the relativistic case, where a nontrivial saddle critical point and strictly negative Gaussian curvature emerge.

\section{The hyperboloid and its hyperbolic paraboloid embedding: Special Relativity as a parametric surface}

The relativistic dispersion relation
\begin{equation}
E^{2}-p^{2}=m^{2}
\label{eq:dispersion_SR}
\end{equation}
defines the so-called mass shell in energy–momentum space. 
It is well known that this set forms a hyperboloid of constant negative curvature, a geometry extensively studied in the mathematical relativity literature \cite{ONeill,Beem,OneillHyperboloid,Spivak}. 
In particular, O’Neill’s classic monograph \cite{ONeill} and Beem–Ehrlich–Easley \cite{Beem} present the hyperboloid as a standard model of Lorentzian geometry, while the original article by O’Neill \cite{OneillHyperboloid} analyzes its curvature properties explicitly. 
For completeness, the formalism of the first and second fundamental forms we use here follows the standard approach in differential geometry textbooks such as Spivak \cite{Spivak}.

\vspace{0.2cm}

At this point it is important to distinguish between two related but different geometric objects:

\begin{itemize}
  \item \textbf{The mass shell itself} is the locus \eqref{eq:dispersion_SR} in the \((E,p)\) plane (more generally in \((E,\vec p)\)-space). This is a two-sheeted hyperboloid (one for $E>0$, one for $E<0$), each sheet having constant negative curvature. 
This is the standard and physically relevant picture of relativistic kinematics.  
  \item \textbf{The embedding used in our framework} is obtained by writing the dispersion relation as the graph of a function in \(\mathbb{R}^{3}\),
\begin{equation}
\mathbf{r}(E,p)=(E,\;p,\;f(E,p)), 
\qquad 
f(E,p)=E^{2}-p^{2}-m^{2}.
\label{eq:parametric_SR}
\end{equation}
In this representation the surface is no longer a hyperboloid but a \emph{hyperbolic paraboloid}, i.e.~a saddle surface. 
This embedding has variable Gaussian curvature but retains the essential Lorentzian signature locally. 
It is this hyperbolic paraboloid representation that we will analyze with the machinery of differential geometry, since it allows us to treat both relativistic and modified dispersion relations on equal footing.
\end{itemize}
\subsection{Tangent and normal vectors}
For a general surface of the form $\mathbf r(E,p)=(E,p,f(E,p))$, the tangent vectors are
\begin{equation}
\mathbf r_{E}=\Big(1,\,0,\,f_{E}\Big), 
\qquad
\mathbf r_{p}=\Big(0,\,1,\,f_{p}\Big),
\label{eq:tangent_general}
\end{equation}
where $f_{E}=\partial f/\partial E$ and $f_{p}=\partial f/\partial p$.  
The (non-normalized) normal vector is obtained as
\begin{equation}
\mathbf n = \mathbf r_{E}\times \mathbf r_{p}=
\Big(-f_{E},\,-f_{p},\,1\Big).
\label{eq:normal_general}
\end{equation}

In the hyperbolic paraboloid embedding \eqref{eq:parametric_SR}, we have
\begin{equation}
f_{E}=2E, 
\qquad 
f_{p}=-2p,
\label{eq:derivatives_f}
\end{equation}
so that the tangent vectors take the explicit form
\begin{equation}
\mathbf r_{E}=(1,0,2E), 
\qquad 
\mathbf r_{p}=(0,1,-2p),
\label{eq:tangent_SR}
\end{equation}
and the normal vector becomes
\begin{equation}
\mathbf n=(-2E,\;2p,\;1).
\label{eq:normal_SR}
\end{equation}

This dual description —hyperboloid as the physical mass shell, and hyperbolic paraboloid as its embedded graphical representation— will be used throughout. 
It avoids confusion and makes clear that our framework does not replace the hyperboloid by a paraboloid, but rather uses the latter as a convenient embedding to apply the standard machinery of differential geometry.
\subsection{First and second fundamental forms}

In general, the first fundamental form is
\begin{equation}
g_{ij}=\mathbf r_{i}\cdot \mathbf r_{j}, \qquad 
\mathrm{I}=\begin{pmatrix}
g_{EE} & g_{Ep}\\
g_{pE} & g_{pp}
\end{pmatrix}.
\label{eq:first_form_general}
\end{equation}
For $f(E,p)=E^{2}-p^{2}-m^{2}$, the components are
\begin{align}
g_{EE}&=1+f_{E}^{2}=1+4E^{2}, \label{eq:gEE}\\
g_{Ep}&=f_{E}f_{p}=-4Ep, \label{eq:gEp}\\
g_{pp}&=1+f_{p}^{2}=1+4p^{2}. \label{eq:gpp}
\end{align}
Thus,
\begin{equation}
\mathrm{I}=
\begin{pmatrix}
1+4E^{2} & -4Ep \\
-4Ep & 1+4p^{2}
\end{pmatrix}.
\label{eq:first_form_SR}
\end{equation}

The second fundamental form is defined as
\begin{equation}
h_{ij}=\mathbf r_{ij}\cdot \hat{\mathbf n}, \qquad 
\mathrm{II}=\begin{pmatrix}
h_{EE} & h_{Ep}\\
h_{pE} & h_{pp}
\end{pmatrix},
\label{eq:second_form_general}
\end{equation}
where $\hat{\mathbf n}=\mathbf n/\|\mathbf n\|$.  
The second derivatives of $f$ are
\begin{equation}
f_{EE}=2, \qquad f_{pp}=-2, \qquad f_{Ep}=0,
\label{eq:second_derivatives}
\end{equation}
so that
\begin{equation}
\mathrm{II}=
\frac{1}{\sqrt{1+4E^{2}+4p^{2}}}
\begin{pmatrix}
2 & 0 \\
0 & -2
\end{pmatrix}.
\label{eq:second_form_SR}
\end{equation}

\subsection{Curvature and critical points}

The Gaussian curvature is given by
\begin{equation}
K(E,p)=\frac{f_{EE}f_{pp}-f_{Ep}^{2}}{(1+f_{E}^{2}+f_{p}^{2})^{2}}
=-\frac{4}{\big(1+4E^{2}+4p^{2}\big)^{2}}<0,
\label{eq:gaussian_curvature}
\end{equation}
while the mean curvature reads
\begin{equation}
H(E,p)=\frac{(1+f_{p}^{2})f_{EE}+(1+f_{E}^{2})f_{pp}}
{2(1+f_{E}^{2}+f_{p}^{2})^{3/2}}
=\frac{4(p^{2}-E^{2})}{\big(1+4E^{2}+4p^{2}\big)^{3/2}}.
\label{eq:mean_curvature}
\end{equation}

Thus, the Lorentzian paraboloid is everywhere hyperbolic ($K<0$), while the sign of $H$ changes along the null lines $E=\pm p$, corresponding to the massless dispersion relation. A critical point occurs at $(E,p)=(0,0)$, with Hessian determinant
\begin{equation}
\det H = f_{EE}f_{pp}-f_{Ep}^{2}=-4<0,
\label{eq:critical_point}
\end{equation}
confirming its nature as a saddle point.

\begin{figure}[h!]
\centering
\includegraphics[width=0.6\textwidth]{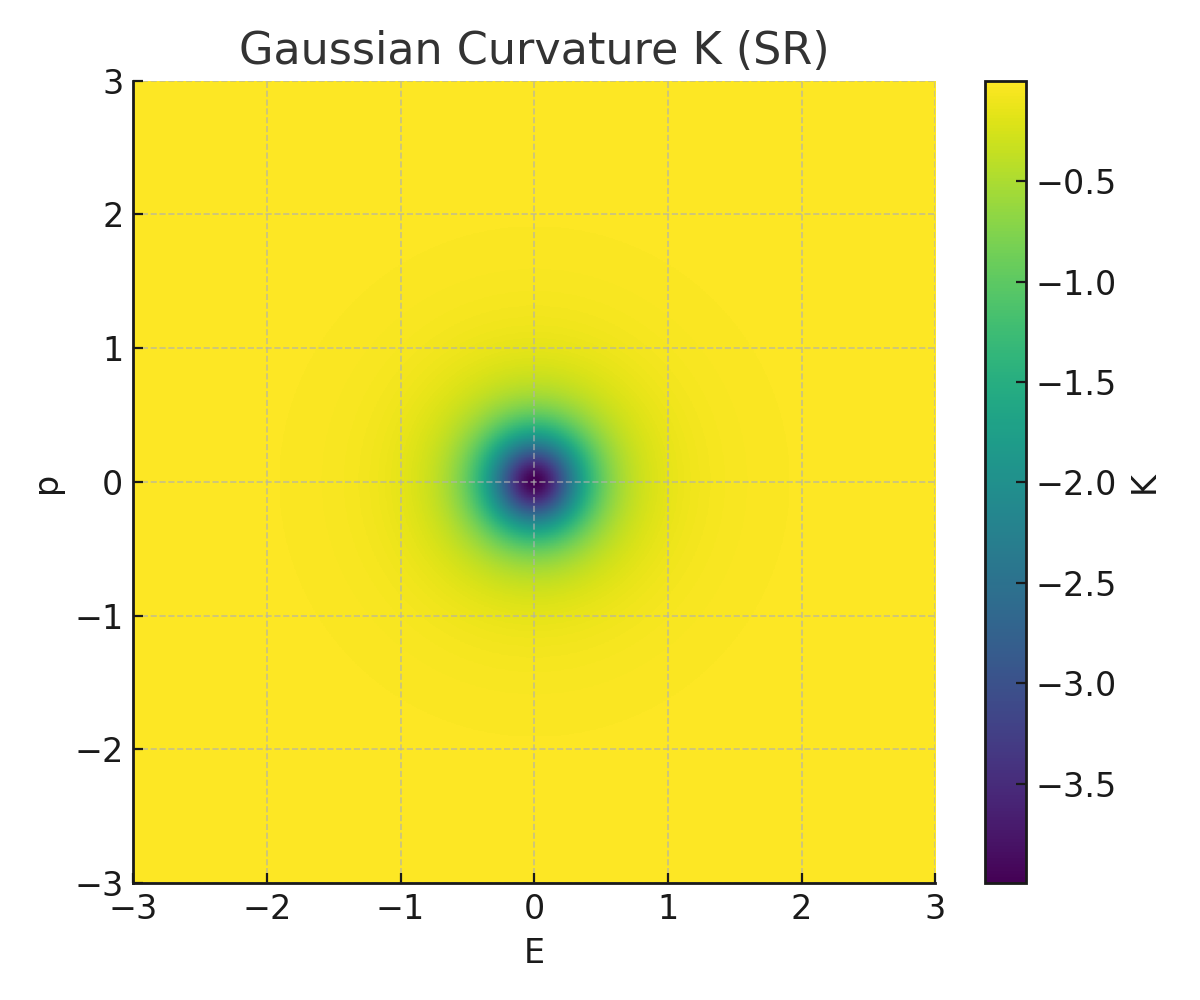}
\caption{Gaussian curvature of the Lorentzian paraboloid. The surface is everywhere negatively curved, with asymptotic directions along $E=\pm p$, corresponding to the massless dispersion relation.}
\label{fig:curvatureSR}
\end{figure}

Although the geometry of the relativistic mass shell is well known, our reinterpretation as a parametric surface allows a unified comparison with the Newtonian case and with modified dispersion relations. In this setting, the existence of a nontrivial critical point and the strictly negative Gaussian curvature appear as compact geometric signatures of the kinematical restrictions specific to Special Relativity.
\section{Relation with Gravity's Rainbow}
\label{sec:GRw}

In Gravity's Rainbow (GRw) \cite{GR}, modified dispersion relations (MDR) are postulated in the factorizable form
\begin{equation}
E^{2}\,f^{2}\!\left(\frac{E}{M_{Pl}}\right) - p^{2}\,g^{2}\!\left(\frac{E}{M_{Pl}}\right) = m^{2},
\label{eq:GRwMDR}
\end{equation}
with smooth, positive functions $f,g$ satisfying $f(0)=g(0)=1$. The functions $f,g$ are often interpreted as deformations of the energy/momentum units, leading to an energy-dependent spacetime metric and, eventually, to modified Einstein equations. 

\subsection{A geometric mapping to the SR mass shell}
\label{subsec:mapping}

Within our framework, \eqref{eq:GRwMDR} admits a direct geometric realization as a parametric surface in $(E,p,z)$ via
\begin{equation}
\mathbf r(E,p) = \big(E,\; p,\; F(E,p)\big), 
\qquad 
F(E,p) := E^{2}f^{2}\!\left(\tfrac{E}{M_{Pl}}\right) - p^{2}g^{2}\!\left(\tfrac{E}{M_{Pl}}\right) - m^{2}.
\end{equation}
Define the smooth change of variables
\begin{equation}
\Phi:\ (E,p)\ \longmapsto\ (u,v),
\qquad
u=E\,f\!\left(\tfrac{E}{M_{Pl}}\right), 
\quad 
v=p\,g\!\left(\tfrac{E}{M_{Pl}}\right).
\label{eq:Phi}
\end{equation}
Assuming $f+E f'\neq 0$ and $g>0$, the Jacobian $J_{\Phi}=(f+E f')\,g$ is nonvanishing and $\Phi$ is a local diffeomorphism. Consider the standard SR surface in $(u,v,z)$,
\begin{equation}
\widehat{\mathbf r}(u,v)=(u,\,v,\,u^{2}-v^{2}-m^{2}).
\end{equation}
Then
\begin{equation}
\mathbf r(E,p)=\widehat{\mathbf r}\!\big(\Phi(E,p)\big),
\end{equation}
and the rainbow mass shell $\{F=0\}$ is mapped onto the SR mass shell $\{u^{2}-v^{2}=m^{2}\}$. In this precise sense, GRw MDR are geometrically equivalent to SR in the $(u,v)$ chart, at the price of a nonlinear reparametrization of energy--momentum.

\subsection{Pullback of the first fundamental form}
\label{subsec:pullback}

Let $\widehat g(u,v)$ denote the first fundamental form of the SR surface $\widehat{\mathbf r}(u,v)$,
\begin{equation}
\widehat g(u,v)=
\begin{pmatrix}
1+4u^{2} & -4uv\\[2pt]
-4uv & 1+4v^{2}
\end{pmatrix}.
\end{equation}
By the chain rule, the metric induced on the rainbow surface in the $(E,p)$ coordinates is the pullback
\begin{equation}
g_{(E,p)}=\Phi^{*}\widehat g
=J_{\Phi}^{\!\top}\,\widehat g(u,v)\,J_{\Phi},
\qquad
J_{\Phi}=
\begin{pmatrix}
\partial_{E}u & \partial_{p}u\\[2pt]
\partial_{E}v & \partial_{p}v
\end{pmatrix}
=
\begin{pmatrix}
f+E f' & 0\\[2pt]
p\,g' & g
\end{pmatrix},
\label{eq:pullback}
\end{equation}
with $u=E f$, $v=p g$, and derivatives $f'=\mathrm d f/\mathrm dE$, $g'=\mathrm d g/\mathrm dE$. In the SR limit $f\equiv g\equiv 1$, one has $J_{\Phi}=\mathbb{I}$ and \eqref{eq:pullback} reduces to the metric previously obtained for the Lorentzian hyperbolic paraboloid.

\subsection{Leading-order expansion in $E/M_{Pl}$}
\label{subsec:LO}

Let $x:=E/M_{Pl}$ and write 
$f(x)=1+\alpha(x)$, 
$g(x)=1+\beta(x)$
with $|\alpha|,|\beta|=O(x)$ near $x=0$. Then
\begin{equation}
u=E\,(1+\alpha),\quad v=p\,(1+\beta),
\qquad
J_{\Phi}=
\begin{pmatrix}
1+\alpha+E\alpha' & 0\\[2pt]
p\beta' & 1+\beta
\end{pmatrix}
+O(x^{2}),
\end{equation}
and
\begin{equation}
\widehat g(u,v)=
\begin{pmatrix}
1+4E^{2}\big(1+2\alpha\big) & -4Ep\big(1+\alpha+\beta\big)\\[2pt]
-4Ep\big(1+\alpha+\beta\big) & 1+4p^{2}\big(1+2\beta\big)
\end{pmatrix}
+O(x^{2}).
\end{equation}
Hence $g_{(E,p)}=\Phi^{*}\widehat g$ produces a transparent, Planck-suppressed deformation of the SR metric on the surface, entirely controlled by $\alpha,\beta$ and their derivatives.

\subsection{Symmetries: linear boosts in $(u,v)$ and nonlinear action in $(E,p)$}
\label{subsec:symmetries}

In $(u,v)$ the Lorentz group acts linearly and preserves $u^{2}-v^{2}=m^{2}$. Transporting this action to $(E,p)$ via $\Phi$ yields a nonlinear representation:
\begin{equation}
(E',p')=\Phi^{-1}\!\Big(\Lambda\cdot \Phi(E,p)\Big),
\qquad
\Lambda\in SO(1,1).
\end{equation}
This is precisely the hallmark of deformed Lorentz symmetry in DSR/GRw scenarios: linear in the rectified chart $(u,v)$, nonlinear back in the physical variables $(E,p)$. Our construction, however, remains purely kinematical in energy--momentum space and does not assume an energy-dependent spacetime metric.

\subsection{Scope and limitations of the rainbow mapping}
\label{subsec:scope}

The diffeomorphic ``straightening'' above relies crucially on the factorizable structure \eqref{eq:GRwMDR} with $f,g=f(E/M_{Pl}),g(E/M_{Pl})$. 
Additive MDR of the form
\begin{equation}
E^{2}=p^{2}+m^{2}+\kappa\,\frac{p^{n}}{M_{Pl}^{\,n-2}},\qquad n>2,
\end{equation}
cannot, in general, be mapped to SR by the simple redefinitions \eqref{eq:Phi}. In such cases, the induced surface departs genuinely from constant curvature, and new geometric features appear (e.g., additional critical points and nonuniform or sign-changing Gaussian curvature). 
Thus, our approach both recovers GRw as a particular case---via an explicit embedding and pullback construction---and provides a criterion to distinguish which MDRs belong to the rainbow class and which lie beyond it.
This classification is, to our knowledge, unique to the present embedding formalism: the traditional mass-shell geometry in Minkowski space does not by itself reveal when a MDR is merely a nonlinear reparametrization of special relativity, nor does it provide a diagnostic to separate trivial (factorizable) from genuinely new (non-factorizable) deformations.
\begin{figure}[t]
\centering
\begin{tikzpicture}[>=stealth,thick,node distance=2cm]
  \tikzstyle{box}=[draw,rounded corners,align=center,
                   minimum width=3.8cm,minimum height=1cm,
                   text centered]

  \node[box,fill=gray!15] (mdr) {General MDR\\$f(E,p)=0$};

  \node[box,fill=blue!15,below left=of mdr,xshift=-0.6cm,yshift=-0.5cm] 
       (fact) {Factorizable MDR\\(Rainbow / DSR)};

  \node[box,fill=red!15,below right=of mdr,xshift=0.6cm,yshift=-0.5cm] 
       (nonfact) {Non-factorizable MDR\\(poly, log, exp, trig)};

  \node[box,fill=blue!25,below=1.8cm of fact] 
       (sr) {SR hyperboloid\\$u^{2}-v^{2}=m^{2}$};

  \node[box,fill=red!25,below=1.8cm of nonfact] 
       (newgeo) {New geometry:\\variable curvature, critical points, tangency};

  \draw[->] (mdr.south west) -- (fact.north);
  \draw[->] (mdr.south east) -- (nonfact.north);
  \draw[->] (fact.south) -- node[right,align=center] {diffeo $\Phi$} (sr.north);
  \draw[->] (nonfact.south) -- node[right,align=center] {no diffeo} (newgeo.north);

\end{tikzpicture}
\caption{Classification of modified dispersion relations (MDRs) in the embedding framework. 
Factorizable MDRs (e.g.~Rainbow/DSR) are diffeomorphic to the SR hyperboloid and introduce no new geometry. 
Non-factorizable MDRs (e.g.~polynomial, logarithmic, exponential) cannot be straightened by a diffeomorphism, leading to genuinely new geometric structures.}
\label{fig:classification}
\end{figure}
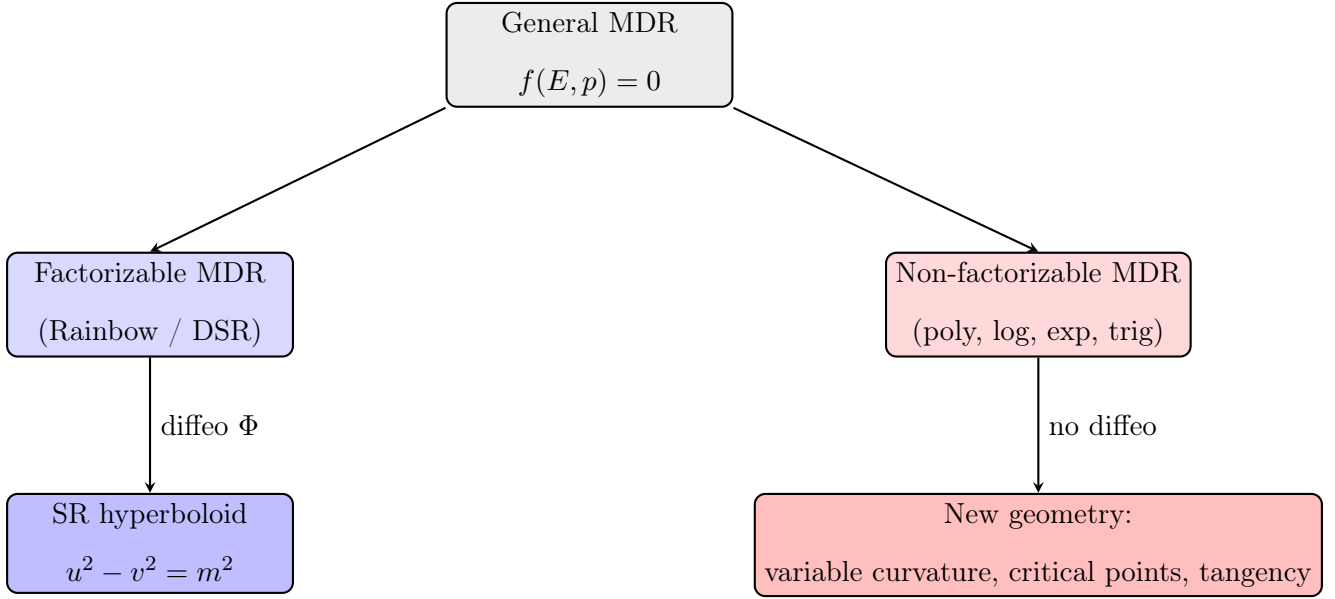

\section{Modified dispersion relations and ultraviolet deformations}
\label{sec:MDR}

In the literature, deviations from the relativistic dispersion relation
\begin{equation}
E^{2}=p^{2}+m^{2}
\label{eq:sr_again}
\end{equation}
at very high energies are not universal. From the perspective of effective field theory, the precise form of such modified dispersion relations (MDRs) depends on the particle species: fermions and bosons can obey different MDRs \cite{Myers}. 

To illustrate our geometric framework, let us consider the generic MDR
\begin{equation}
E^{2}=p^{2}+m^{2}+\kappa_{A}\frac{p^{n}}{M_{Pl}^{\,n-2}},
\label{eq:MDR_relation}
\end{equation}
with $n>2$, $\kappa_{A}$ a dimensionless parameter depending on the particle species $A$, and $M_{Pl}\simeq 1.22\times 10^{19}$ GeV the Planck energy scale \cite{Jacob}. The sign of $\kappa_{A}$ distinguishes ``subluminal'' ($\kappa_{A}<0$) and ``superluminal'' ($\kappa_{A}>0$) scenarios, since the group velocity is $v(E)=\partial E/\partial p$.  

We associate to \eqref{eq:MDR_relation} the parametric surface
\begin{equation}
\mathbf r(E,p)=\big(E,\,p,\,f_{\mathrm M}(E,p)\big),
\qquad
f_{\mathrm M}(E,p)=E^{2}-p^{2}-m^{2}-\kappa_{A}\frac{p^{n}}{M_{Pl}^{\,n-2}}.
\label{eq:MDR_surface}
\end{equation}

\subsection{Tangent and normal vectors}
The tangent vectors are
\begin{equation}
\mathbf r_{E}=(1,0,f_{E})=(1,0,2E),
\qquad
\mathbf r_{p}=(0,1,f_{p})=\Big(0,1,-2p-n\kappa_{A}\frac{p^{n-1}}{M_{Pl}^{\,n-2}}\Big),
\label{eq:MDR_tangent}
\end{equation}
and the (non-unit) normal vector is
\begin{equation}
\mathbf n=\mathbf r_{E}\times\mathbf r_{p}
=\Big(-2E,\,2p+n\kappa_{A}\tfrac{p^{n-1}}{M_{Pl}^{\,n-2}},\,1\Big).
\label{eq:MDR_normal}
\end{equation}
The normalization factor is
\begin{equation}
W=\sqrt{1+f_{E}^{2}+f_{p}^{2}}
=\sqrt{\,1+4E^{2}+\Big(2p+n\kappa_{A}\tfrac{p^{n-1}}{M_{Pl}^{\,n-2}}\Big)^{2}\,}.
\label{eq:MDR_W}
\end{equation}

\subsection{First and second fundamental forms}
The first fundamental form has coefficients
\begin{align}
g_{EE}&=1+f_{E}^{2}=1+4E^{2}, 
\label{eq:MDR_gEE}\\
g_{Ep}&=f_{E}f_{p}=-2E\left(2p+n\kappa_{A}\tfrac{p^{n-1}}{M_{Pl}^{\,n-2}}\right),
\label{eq:MDR_gEp}\\
g_{pp}&=1+f_{p}^{2}=1+\left(2p+n\kappa_{A}\tfrac{p^{n-1}}{M_{Pl}^{\,n-2}}\right)^{2}.
\label{eq:MDR_gpp}
\end{align}
Thus, in matrix form,
\begin{equation}
\mathrm{I}_{\mathrm M}=
\begin{pmatrix}
1+4E^{2} & g_{Ep}\\[4pt]
g_{Ep} & 1+\left(2p+n\kappa_{A}\tfrac{p^{n-1}}{M_{Pl}^{\,n-2}}\right)^{2}
\end{pmatrix}.
\label{eq:MDR_first_form}
\end{equation}

The second derivatives are
\begin{equation}
f_{EE}=2,\qquad
f_{pp}=-2-\frac{n(n-1)\kappa_{A}}{M_{Pl}^{\,n-2}}\,p^{\,n-2},
\qquad
f_{Ep}=0.
\label{eq:MDR_second_derivatives}
\end{equation}
Hence the second fundamental form is
\begin{equation}
\mathrm{II}_{\mathrm M}=\frac{1}{W}
\begin{pmatrix}
2 & 0\\[6pt]
0 & -2-\dfrac{n(n-1)\kappa_{A}}{M_{Pl}^{\,n-2}}\,p^{\,n-2}
\end{pmatrix}.
\label{eq:MDR_second_form}
\end{equation}

\subsection{Curvatures}
From the general formulas for graphs,
\begin{equation}
K=\frac{f_{EE}f_{pp}-f_{Ep}^{2}}{(1+f_{E}^{2}+f_{p}^{2})^{2}},
\qquad
H=\frac{(1+f_{p}^{2})f_{EE}+(1+f_{E}^{2})f_{pp}}
{2(1+f_{E}^{2}+f_{p}^{2})^{3/2}},
\label{eq:MDR_curvature_general}
\end{equation}
we obtain
\begin{equation}
K_{\mathrm M}(E,p)=\frac{2f_{pp}}{(1+4E^{2}+f_{p}^{2})^{2}},
\qquad
H_{\mathrm M}(E,p)=\frac{2(1+f_{p}^{2})+(1+4E^{2})f_{pp}}{2(1+4E^{2}+f_{p}^{2})^{3/2}}.
\label{eq:MDR_curvatures}
\end{equation}
Unlike the relativistic paraboloid, here the Gaussian curvature is not of fixed sign, and can change depending on $p$, $n$, and the sign of $\kappa_{A}$.

\begin{figure}[h!]
\centering
\includegraphics[width=0.6\textwidth]{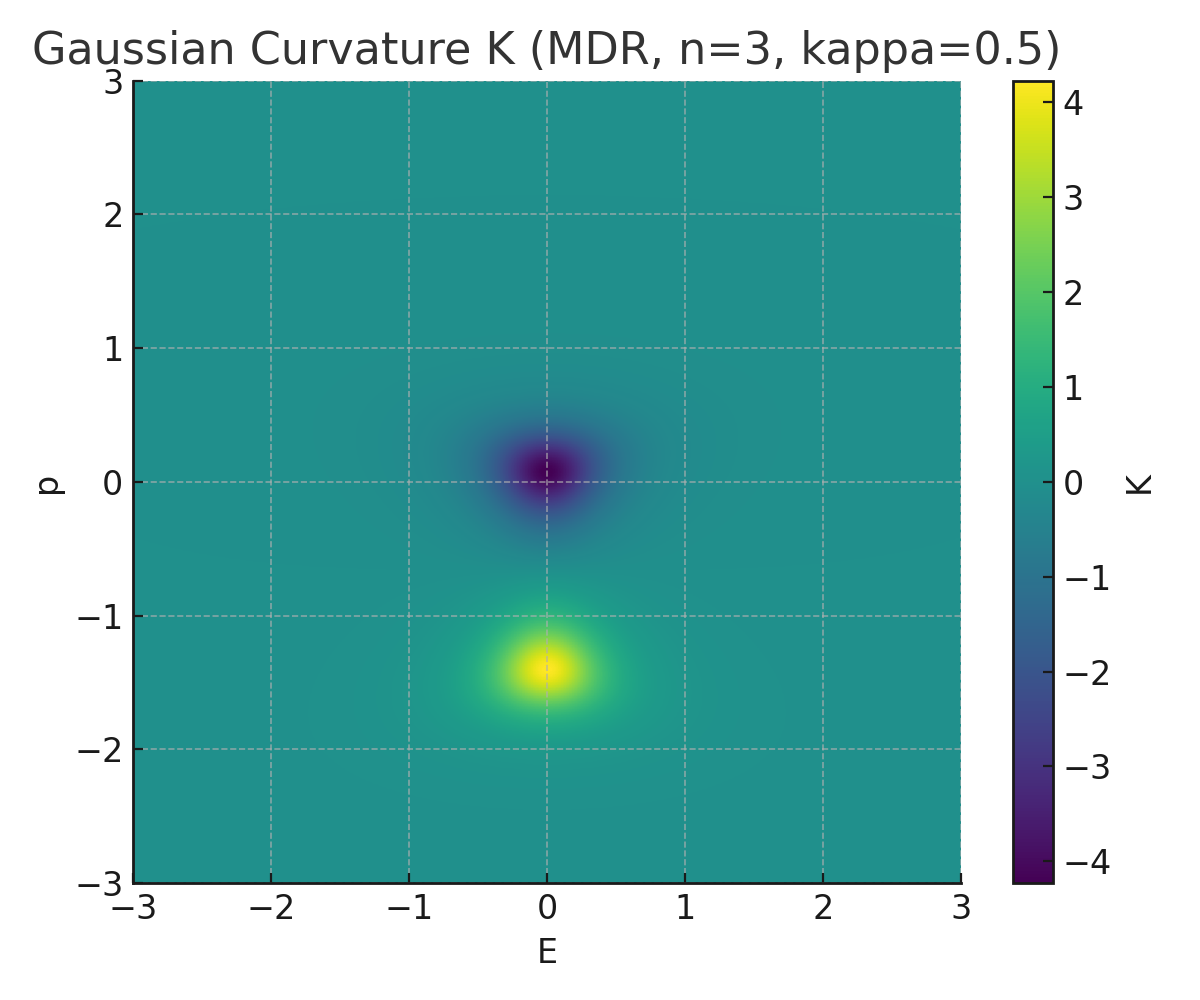}
\caption{Gaussian curvature of the MDR surface for $n=3$, $\kappa_{A}>0$. 
Unlike the Lorentzian paraboloid, the curvature is not everywhere negative: 
it varies with momentum and can change sign, reflecting the richer 
kinematical structure implied by the MDR.}
\label{fig:MDR_curvature}
\end{figure}

\subsection{Critical points and discussion}
Critical points satisfy
\begin{equation}
f_{E}=2E=0,
\qquad
f_{p}=-2p-n\kappa_{A}\frac{p^{n-1}}{M_{Pl}^{\,n-2}}=0.
\label{eq:MDR_critical_eqs}
\end{equation}
The solutions are
\begin{equation}
p_{1}=0,\qquad
p_{2}=\left(\frac{-2M_{Pl}^{\,n-2}}{n\kappa_{A}}\right)^{1/(n-2)}.
\label{eq:MDR_critical_points}
\end{equation}
Thus, unlike the Newtonian and relativistic cases, the MDR surface can exhibit two distinct critical points. For $\kappa_{A}>0$ (superluminal), $p_{2}$ is real only if $n$ is odd, the simplest case being $n=3$, which yields
\begin{equation}
E^{2}=p^{2}+m^{2}+\kappa_{A}\frac{p^{3}}{M_{Pl}},
\label{eq:MDR_cubic}
\end{equation}
a cubic MDR already encountered in effective field theory \cite{Myers}. For $\kappa_{A}<0$ (subluminal), the second critical point exists for all $n>2$.  

In summary, MDR surfaces combine two novel features absent in Newtonian and relativistic dispersion relations: (i) a Gaussian curvature of variable sign, and (ii) additional critical points. Both can be interpreted as geometric signatures of new kinematical restrictions, possibly associated with Planck-scale physics.

\section{Critical points as geometric signatures of kinematical restrictions}
\label{subsec:critical-signatures}

Let $f(E,p)$ define the energy--momentum surface as a graph $z=f(E,p)$ and the mass shell as the level set $f(E,p)=0$.
We call \emph{critical points} the points where $\nabla f=(f_E,f_p)=(0,0)$. In this subsection we argue that:
(i) in Newtonian kinematics no critical points occur, consistently with the developable geometry and the absence of an invariant speed; 
(ii) in SR a saddle-type critical point arises, concomitant with strictly negative Gaussian curvature and the existence of null directions $E=\pm p$;
(iii) in MDR additional critical points can appear, signaling extra kinematical thresholds (new invariant scales) beyond the relativistic one.

\paragraph*{Geometric preliminaries.}
For a surface $z=f(E,p)$, the first and second fundamental forms read
\begin{equation}
\mathrm{I}=
\begin{pmatrix}
1+f_E^2 & f_E f_p\\
f_E f_p & 1+f_p^2
\end{pmatrix}, 
\qquad
\mathrm{II}=
\frac{1}{\sqrt{1+f_E^2+f_p^2}}
\begin{pmatrix}
f_{EE} & f_{Ep}\\
f_{Ep} & f_{pp}
\end{pmatrix}.
\end{equation}
Hence, the Gaussian and mean curvatures are
\begin{equation}
K(E,p)=\frac{f_{EE}f_{pp}-f_{Ep}^2}{\big(1+f_E^2+f_p^2\big)^2}, 
\qquad 
H(E,p)=\frac{(1+f_p^2)f_{EE}-2 f_E f_p f_{Ep}+(1+f_E^2)f_{pp}}{2\big(1+f_E^2+f_p^2\big)^{3/2}}.
\label{eq:KH-general}
\end{equation}
Along the mass shell $f=0$, the (group) velocity follows from implicit differentiation:
\begin{equation}
\frac{dE}{dp}\Big|_{f=0}=-\frac{f_p}{f_E}.
\label{eq:slope}
\end{equation}
If $f_E\neq 0$ (implicit function theorem), the shell can be locally written as $E(p)$; if $f_E=0$ one may instead use $p(E)$ provided $f_p\neq 0$.

\paragraph*{\textbf{Newtonian case}.}
For $f_{\text N}(E,p)=E-\frac{p^2}{2m}$ one has $f_E\equiv 1$, $f_p=-\frac{p}{m}$, hence no critical points exist:
\begin{equation}
f_E=0 \;\text{never holds},\qquad f_p=0 \Rightarrow p=0 \;\text{but}\; f_E\neq 0.
\end{equation}
Moreover $K\equiv 0$ (developable surface). Geometrically and kinematically, the dispersion surface is ``flat'' in the sense of vanishing Gaussian curvature and lacks any special point where both derivatives vanish. This matches the absence of any invariant limiting velocity in Newtonian kinematics.

\paragraph*{\textbf{Special Relativity}.}
For $f_{\text{SR}}(E,p)=E^2-p^2-m^2$ one finds a unique critical point at $(E,p)=(0,0)$:
\begin{equation}
f_E=2E,\quad f_p=-2p,\quad f_{EE}=2,\quad f_{pp}=-2,\quad f_{Ep}=0,
\end{equation}
with Hessian determinant $f_{EE}f_{pp}-f_{Ep}^2=-4<0$, i.e. a saddle.
The curvatures are
\begin{equation}
K_{\text{SR}}(E,p)=-\frac{4}{\big(1+4(E^2+p^2)\big)^2}<0, 
\qquad
H_{\text{SR}}(E,p)=\frac{4(p^2-E^2)}{\big(1+4(E^2+p^2)\big)^{3/2}},
\end{equation}
so the surface is globally hyperbolic (negative $K$). The two asymptotic directions at the saddle are given by the null lines $E=\pm p$ (obtained from the quadratic part $E^2-p^2$ at the origin), which coincide with the massless dispersion $E=\pm p$. 
While the invariant speed $c$ is not encoded by the critical point alone, the existence of a saddle with two asymptotic (null) directions and strictly negative Gaussian curvature provides the geometric substrate for a universal light cone structure: along $f=0$, \eqref{eq:slope} yields $dE/dp=\pm 1$ on the massless branches.

\paragraph*{\textbf{Modified dispersion relations: extra critical points and thresholds}.}
\medskip

Consider, for instance,
\begin{equation}
f_{\text{MDR}}(E,p)=E^2-p^2-m^2-\kappa\,\frac{p^{n}}{M_{Pl}^{\,n-2}},\qquad n>2.
\end{equation}
Critical points satisfy
\begin{equation}
f_E=2E=0,\qquad 
f_p=-2p-\frac{n\kappa}{M_{Pl}^{\,n-2}}\,p^{\,n-1}=0,
\end{equation}
hence $p_1=0$ (the SR saddle) and, provided the RHS admits a nonzero real root,
\begin{equation}
p_2=\left(\frac{-2M_{Pl}^{\,n-2}}{n\kappa}\right)^{\!\frac{1}{n-2}}
\quad (\text{real if }\;\kappa<0\ \text{or}\ n\ \text{odd and }\kappa>0).
\end{equation}
A quadratic Taylor expansion at $(E,p)=(0,p_2)$ gives
\begin{align}
f_{\text{MDR}}(E,p_2+\delta p) &= 
E^2 + \tfrac{1}{2}\,f_{pp}(0,p_2)\,(\delta p)^2 \nonumber \\
&\quad + O\big(E\,\delta p,\,\|(E,\delta p)\|^3\big), \\
f_{pp}(0,p_2) &= -2 
-\frac{n(n-1)\kappa}{M_{Pl}^{\,n-2}}\,p_2^{\,n-2}.
\end{align}
If $f_{pp}(0,p_2)\neq 0$, the Hessian at $(0,p_2)$ has signature $(+,-)$ and the point is again a saddle; if $f_{pp}(0,p_2)=0$ (fine tuning), higher-order terms control the nature of the critical point. 
Along $f=0$, the slope \eqref{eq:slope} reads
\begin{equation}
\frac{dE}{dp}=\frac{2p+\frac{n\kappa}{M_{Pl}^{\,n-2}}\,p^{\,n-1}}{2E}.
\label{eq:slopeMDR}
\end{equation}
At $p=p_2$ one has $f_p=0$ and $E=0$, so \eqref{eq:slopeMDR} is indeterminate ($0/0$): the mass shell develops a \emph{turning point / branching} where the implicit function theorem fails in both parametrizations $E(p)$ and $p(E)$. 
This signals a \emph{kinematical threshold}: a new invariant momentum/energy scale where the dispersion changes regime. 
Geometrically, the additional critical point marks the emergence of a second saddle (or a degenerate criticality), and $K(E,p)$ is deformed with respect to SR:
\begin{equation}
K_{\text{MDR}}(E,p)=\frac{2\big(-2-\frac{n(n-1)\kappa}{M_{Pl}^{\,n-2}}\,p^{\,n-2}\big)}
{\Big(1+4E^2+\big[-2p-\frac{n\kappa}{M_{Pl}^{\,n-2}}\,p^{\,n-1}\big]^2\Big)^2}.
\end{equation}
For $\kappa>0$ (``superluminal'' UV) the numerator becomes more negative and hyperbolicity is enhanced; for $\kappa<0$ (``subluminal'' UV) the numerator can change sign at large $|p|$, producing local elliptic regions ($K>0$). 
Thus, extra critical points and the accompanying distortion of $K$ are \emph{geometric markers} of additional kinematical restrictions beyond SR (new scales/thresholds), in sharp contrast with the Newtonian surface (no critical points, $K=0$ everywhere).
To make the discussion more transparent, Fig.~\ref{fig:curvature-comparison} displays the Gaussian curvature of the three dispersion surfaces. While the Newtonian case is flat ($K=0$ everywhere), special relativity exhibits strictly negative curvature with a single saddle at the origin, and the modified relation develops additional critical points where the curvature changes sign, revealing new kinematical thresholds.
\begin{figure}[t]
\centering
\includegraphics[width=0.9\textwidth]{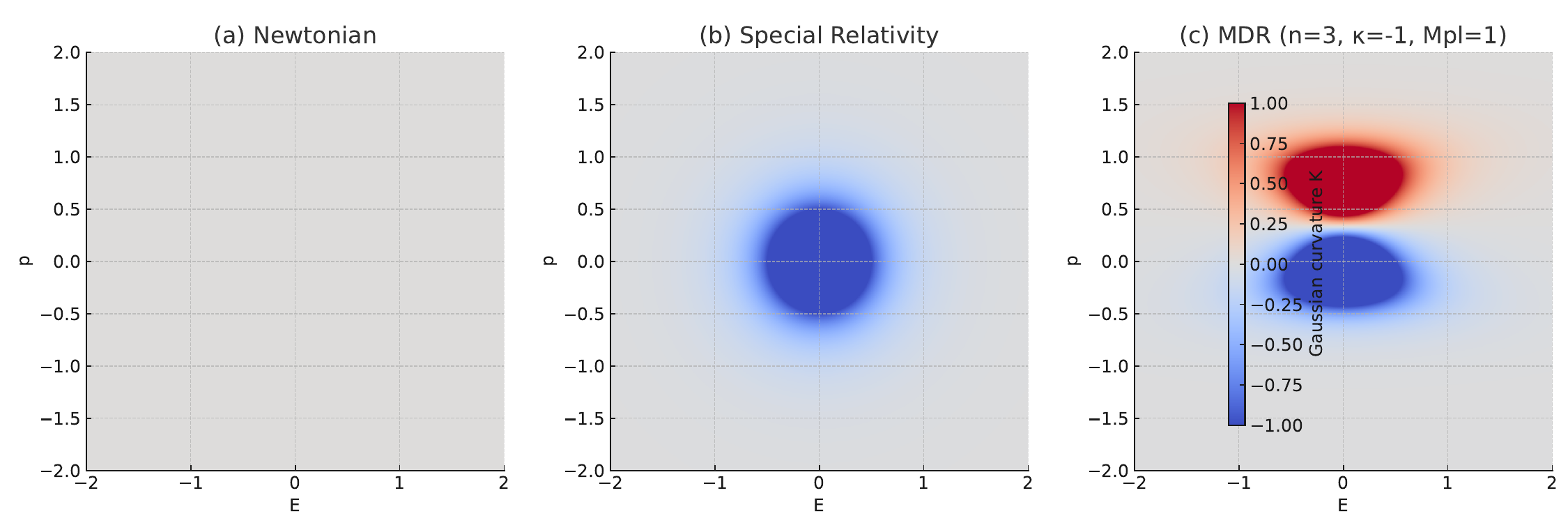}
\caption{Gaussian curvature maps of the dispersion surfaces. 
(a) Newtonian case: flat with $K=0$ and no critical points. 
(b) Special Relativity: strictly negative curvature, with a unique saddle-type critical point at the origin. 
(c) Modified dispersion relation ($n=3$, $\kappa=-1$, $M_{\rm Pl}=1$): curvature of variable sign and an additional critical point, signaling a new kinematical threshold.}
\label{fig:curvature-comparison}
\end{figure}

\paragraph*{Summary.}
Newtonian kinematics yields a developable dispersion surface with no critical points. 
Special Relativity introduces a saddle-type critical point and globally negative Gaussian curvature, consistent with a universal light cone structure (null directions $E=\pm p$ on the mass shell). 
Modified dispersion relations may generate \emph{additional} critical points, where the mass shell develops turning/branching behaviour and the Gaussian curvature pattern is qualitatively altered. 
We therefore propose that the \emph{existence and nature of critical points} in the dispersion surface provide a mathematically precise, geometric signature of fundamental kinematical restrictions: absent in Newtonian theory, present in SR, and possibly enriched in MDR by extra thresholds tied to new invariant scales.
\section{Discussion}

The geometry of the relativistic mass shell, understood as the hyperboloid 
$E^{2}-p^{2}=m^{2}$ in Minkowski space, is of course well known: it has constant 
negative curvature and its geodesics can be characterized by symmetry arguments. 
Our work does not aim to rediscover this classical fact, but rather to reinterpret 
it within a broader geometric framework in which dispersion relations are viewed 
as parametric surfaces in $\mathbb{R}^{3}$. 

This perspective allows for a direct application of the full toolkit of differential 
geometry of surfaces, including the first and second fundamental forms, tangent 
and normal vectors, Gaussian and mean curvatures, and the classification of 
critical points. The novelty of our proposal lies in the comparative and unifying 
power of this method. Indeed:

\begin{itemize}
\item For the Newtonian dispersion relation, the associated surface is developable, 
with vanishing Gaussian curvature and no critical points. 
\item For the relativistic paraboloid, the surface acquires constant negative 
curvature and a single critical point, naturally reflecting the universal 
speed-limit constraint. 
\item For ultraviolet-modified dispersion relations, the surfaces exhibit Gaussian 
curvature of variable sign and typically develop two distinct critical points, 
which can be interpreted as geometric signatures of new kinematical restrictions 
possibly tied to Planck-scale physics.
\end{itemize}

Thus, although the geometry of the relativistic mass shell by itself is not new, 
our contribution is to provide a unified parametric-surface framework that 
encompasses Newtonian, relativistic, and modified dispersion relations on an 
equal footing. Within this framework, the existence and nature of critical 
points, together with the behavior of the Gaussian curvature, emerge as simple 
yet powerful diagnostics of the underlying physical principles. 

We believe this geometric perspective opens a new avenue for studying 
Lorentz-invariance violations and modified dispersion relations, complementing 
other approaches such as Finsler geometry or Rainbow Gravity, and providing 
a compact, visual, and mathematically transparent language for exploring 
possible deformations of special relativity.
\appendix
\section{Geodesics on energy--momentum surfaces}
\label{app:generalMDR}
We consider surfaces given as graphs $z=f(E,p)$, with embedding 
$\mathbf r(E,p)=(E,p,f(E,p))$. The induced metric (first fundamental form) is
\begin{equation}
g_{ij} = \delta_{ij} + f_i f_j,
\qquad
(g_{ij})=
\begin{pmatrix}
1+f_E^2 & f_E f_p\\[2pt]
f_E f_p & 1+f_p^2
\end{pmatrix}, 
\qquad
\Delta:=\det g = 1+f_E^2+f_p^2,
\label{eq:app-metric}
\end{equation}
with inverse
\begin{equation}
(g^{ij})=\frac{1}{\Delta}
\begin{pmatrix}
1+f_p^2 & -\,f_E f_p\\[2pt]
-\,f_E f_p & 1+f_E^2
\end{pmatrix}.
\label{eq:app-inverse}
\end{equation}

For Monge patches there is a convenient compact expression for the Christoffel symbols:
\begin{equation}
\boxed{\quad 
\Gamma^{k}{}_{ij} \;=\; \frac{f_{ij}\, f_{k}}{\Delta}, 
\qquad k,i,j\in\{E,p\},\quad 
\Delta=1+f_E^2+f_p^2\quad}
\label{eq:app-Gamma-compact}
\end{equation}
obtained by inserting $g_{ij}=\delta_{ij}+f_i f_j$ into 
$\Gamma^k{}_{ij}=\tfrac12 g^{k\ell}(\partial_i g_{j\ell}+\partial_j g_{i\ell}-\partial_\ell g_{ij})$.

Let $x^1=E$, $x^2=p$ and $t$ an affine parameter. The geodesic equations are
\begin{equation}
\ddot x^{k}+\Gamma^{k}{}_{ij}\,\dot x^{i}\dot x^{j}=0,
\qquad
\text{i.e.}\quad
\begin{cases}
\ddot E+\dfrac{f_E}{\Delta}\Big(f_{EE}\dot E^{2}+2f_{Ep}\dot E\dot p+f_{pp}\dot p^{2}\Big)=0,\\[10pt]
\ddot p+\dfrac{f_p}{\Delta}\Big(f_{EE}\dot E^{2}+2f_{Ep}\dot E\dot p+f_{pp}\dot p^{2}\Big)=0,
\end{cases}
\label{eq:app-geodesics-general}
\end{equation}
where dots denote derivatives with respect to $t$.

\subsection*{A. Newtonian surface}
\label{app:newton}
Let $f_{\mathrm N}(E,p)=E-\dfrac{p^{2}}{2m}$. Then
\[
f_E=1,\quad f_p=-\frac{p}{m},\quad f_{EE}=0,\quad f_{Ep}=0,\quad f_{pp}=-\frac{1}{m},
\quad \Delta=2+\frac{p^{2}}{m^{2}}.
\]
By \eqref{eq:app-Gamma-compact} the only nonvanishing symbols are
\begin{equation}
\Gamma^{E}{}_{pp}=\frac{f_{pp}f_E}{\Delta}=-\frac{1}{m\,\Delta}, 
\qquad
\Gamma^{p}{}_{pp}=\frac{f_{pp}f_p}{\Delta}=\frac{p}{m^{2}\,\Delta}.
\label{eq:app-newton-Gamma}
\end{equation}
Hence,
\begin{equation}
\ddot E - \frac{1}{m\,\Delta}\,\dot p^{2}=0,
\qquad
\ddot p + \frac{p}{m^{2}\,\Delta}\,\dot p^{2}=0,
\qquad
\Delta=2+\frac{p^{2}}{m^{2}}.
\label{eq:app-newton-geod}
\end{equation}
Since $g_{ij}$ does not depend on $E$, the Killing symmetry $\partial_E$ yields the conserved quantity
\begin{equation}
\Pi_{E} \;=\; g_{EE}\,\dot E + g_{Ep}\,\dot p
\;=\; 2\,\dot E - \frac{p}{m}\,\dot p
\;=\; \text{const},
\label{eq:app-newton-constant}
\end{equation}
which integrates the system by quadratures together with the normalization condition 
$g_{ij}\dot x^{i}\dot x^{j}=\varepsilon$ (e.g. $\varepsilon=1$ for arc-length).

\subsection*{B. Lorentzian paraboloid (Special Relativity)}
\label{app:sr}
Let $f_{\mathrm L}(E,p)=E^{2}-p^{2}-m^{2}$. Then
\[
f_E=2E,\quad f_p=-2p,\quad f_{EE}=2,\quad f_{Ep}=0,\quad f_{pp}=-2,
\quad \Delta=1+4E^{2}+4p^{2}.
\]
From \eqref{eq:app-Gamma-compact}:
\begin{equation}
\Gamma^{E}{}_{EE}=\frac{4E}{\Delta},\quad 
\Gamma^{E}{}_{pp}=-\frac{4E}{\Delta},\quad
\Gamma^{p}{}_{EE}=-\frac{4p}{\Delta},\quad
\Gamma^{p}{}_{pp}=\frac{4p}{\Delta},
\quad \Gamma^{k}{}_{Ep}=0.
\label{eq:app-sr-Gamma}
\end{equation}
Therefore,
\begin{equation}
\ddot E + \frac{4E}{\Delta}\big(\dot E^{2}-\dot p^{2}\big)=0,
\qquad
\ddot p + \frac{4p}{\Delta}\big(\dot p^{2}-\dot E^{2}\big)=0,
\qquad
\Delta=1+4E^{2}+4p^{2}.
\label{eq:app-sr-geod}
\end{equation}
A closed family of solutions is obtained with the linear ansatz \(E(t)=A t+B\) ($A,B$ constants); the first equation gives \(\dot p=\pm A\) and hence
\begin{equation}
p(t)=\pm A t + C,
\qquad
\Rightarrow\quad 
E(p)=\pm p + (B\mp C),
\label{eq:app-sr-lines}
\end{equation}
i.e. straight lines in the $(E,p)$-plane; the case \(B=C\) reproduces the massless branches \(E=\pm p\).

\subsection*{C. Modified dispersion relations (MDR)}
\label{app:mdr}
Let \(f_{\mathrm M}(E,p)=E^{2}-p^{2}-m^{2}-\kappa\,\dfrac{p^{n}}{M_{Pl}^{\,n-2}}\) with \(n>2\).
Then
\[
f_E=2E,\quad 
f_p=-2p-\frac{n\kappa}{M_{Pl}^{\,n-2}}\,p^{\,n-1},\quad
f_{EE}=2,\quad f_{Ep}=0,\quad
f_{pp}=-2-\frac{n(n-1)\kappa}{M_{Pl}^{\,n-2}}\,p^{\,n-2},
\]
and
\[
\Delta=1+f_E^2+f_p^2
=1+4E^{2}+\Big(2p+\frac{n\kappa}{M_{Pl}^{\,n-2}}\,p^{\,n-1}\Big)^{2}.
\]
By \eqref{eq:app-Gamma-compact} the nonvanishing Christoffels are
\begin{equation}
\Gamma^{E}{}_{EE}=\frac{4E}{\Delta},\qquad
\Gamma^{E}{}_{pp}=\frac{2E\,f_{pp}}{\Delta},\qquad
\Gamma^{p}{}_{EE}=\frac{2f_p}{\Delta},\qquad
\Gamma^{p}{}_{pp}=\frac{f_p\,f_{pp}}{\Delta},
\label{eq:app-mdr-Gamma}
\end{equation}
and the geodesic equations become
\begin{equation}
\ddot E + \frac{2E}{\Delta}\Big(2\,\dot E^{2}+f_{pp}\,\dot p^{2}\Big)=0,
\qquad
\ddot p + \frac{f_p}{\Delta}\Big(2\,\dot E^{2}+f_{pp}\,\dot p^{2}\Big)=0.
\label{eq:app-mdr-geod}
\end{equation}

\subsubsection*{C.1 Quadratures near an additional critical point}
Assume the MDR admits a second critical point at $(E,p)=(0,p_2)$, i.e.
$f_E(0,p_2)=0$ and $f_p(0,p_2)=0$ (conditions discussed in the main text).
To obtain a first-order equation for $u(p):=\dfrac{dE}{dp}$, suppose $\dot p\neq 0$
and write $\dot E = u\,\dot p$. Then
\[
\ddot E = u'\,\dot p^2 + u\,\ddot p,
\qquad
Q:=f_{EE}\dot E^2 + 2 f_{Ep}\dot E\dot p + f_{pp}\dot p^2
= \big(f_{EE}u^2 + 2 f_{Ep}u + f_{pp}\big)\,\dot p^2.
\]
Using \eqref{eq:app-geodesics-general} to eliminate $\ddot p$ gives the exact first-order ODE
\begin{equation}
\boxed{\quad
u'(p) + \frac{f_E - u\,f_p}{\Delta}\,\Big(f_{EE}u^2 + 2 f_{Ep}u + f_{pp}\Big)=0,
\qquad u=\frac{dE}{dp}.
\quad}
\label{eq:app-mdr-u-ode}
\end{equation}
This is a Riccati-type equation because $f_E=2E$ depends on $E(p)=\int u\,dp$.
However, \textit{near the additional critical point} $p\simeq p_2$ one has $E\simeq 0$,
so $f_E\simeq 0$ and $\Delta\simeq 1+f_p(p)^2$ depends on $p$ only. With $f_{Ep}=0$
for the MDRs considered, \eqref{eq:app-mdr-u-ode} simplifies to
\begin{equation}
u'(p) - \frac{u\,f_p(p)}{\Delta(p)}\Big(2 u^2 + f_{pp}(p)\Big)=0,
\qquad
\Delta(p)\simeq 1+f_p(p)^2.
\label{eq:app-mdr-u-ode-local}
\end{equation}
Introducing $w(p):=u(p)^{-2}$ yields a \emph{linear} first-order ODE
\begin{equation}
w'(p) + \frac{2 f_p(p) f_{pp}(p)}{\Delta(p)}\,w(p)
= -\,\frac{4 f_p(p)}{\Delta(p)}.
\label{eq:app-mdr-w-linear}
\end{equation}
This admits the closed-form integrating-factor solution
\begin{equation}
w(p) = e^{-\Phi(p)}\!\left[
\,w(p_*) - 4\!\int_{p_*}^{p}\!\frac{f_p(\xi)}{\Delta(\xi)}\,e^{\Phi(\xi)}\,d\xi
\right],
\qquad
\Phi(p)=\int_{p_*}^{p}\frac{2 f_p(\xi) f_{pp}(\xi)}{\Delta(\xi)}\,d\xi,
\label{eq:app-mdr-w-solution}
\end{equation}
where $p_*$ is a reference value near $p_2$. Then $u(p)=w(p)^{-1/2}$ and 
\begin{equation}
E(p)=E(p_*)+\int_{p_*}^{p} u(\xi)\,d\xi
\label{eq:app-mdr-E-by-quadrature}
\end{equation}
provides the geodesic by quadratures in a neighborhood of the additional critical point.
For the explicit MDR family 
$f_p(p)=-2p-\dfrac{n\kappa}{M_{Pl}^{\,n-2}}\,p^{\,n-1}$ and 
$f_{pp}(p)=-2-\dfrac{n(n-1)\kappa}{M_{Pl}^{\,n-2}}\,p^{\,n-2}$, so
both $\Phi$ and the integral in \eqref{eq:app-mdr-w-solution} are explicit functions of $p$.

\medskip
\noindent\textbf{Remarks.}
(i) In the Newtonian case the metric tensor does not depend on $E$, and the conserved quantity $\Pi_E$ in \eqref{eq:app-newton-constant} follows from the associated Killing symmetry.  
(ii) In SR and MDR, $\Delta$ depends on both $E$ and $p$, and no analogous conserved momentum exists \textit{a priori}; nevertheless, the family \eqref{eq:app-sr-lines} provides explicit solutions in SR.  
(iii) All expressions above are independent of the orientation of the normal vector; the sign of the mean curvature may change with orientation, but neither the Gaussian curvature $K$ nor the geodesic equations are affected.

\section{General MDR functions: critical points and curvature}
\label{app:generalMDR}

In the main text we focused on polynomial modified dispersion relations of the form
$E^2=p^2+m^2+\kappa p^n/M_{Pl}^{\,n-2}$. 
Here we summarize the general case, where the correction is an arbitrary smooth function 
$g(p)$ entering as
\[
f(E,p) = E^2 - p^2 - m^2 - g(p).
\]
The following proposition characterizes the critical points and curvature of the corresponding energy–momentum surface $\mathbf{r}(E,p)=(E,p,f(E,p))$ in full generality.
\begin{proposition}[Critical points and curvature of general MDR surfaces]
\label{prop:MDRcrit}
Let a modified dispersion relation be written as
\begin{equation}
f(E,p)=E^2 - p^2 - m^2 - g(p),
\end{equation}
with $g\in C^2(\mathbb{R})$. The associated energy--momentum surface
$\mathbf{r}(E,p)=(E,p,f(E,p))$ satisfies:

\begin{enumerate}
    \item Critical points are the solutions of
    \begin{equation}
    f_E=2E=0,\qquad f_p=-2p-g'(p)=0.
    \end{equation}
    Hence $(E,p)=(0,0)$ is always a critical point (the SR saddle), and any additional critical points correspond to real roots of
    \begin{equation}
    2p+g'(p)=0.
    \end{equation}

    \item The Hessian of $f$ at $(E,p)=(0,p_\star)$ is diagonal,
    \begin{equation}
    Hf=\begin{pmatrix} f_{EE} & f_{Ep}\\ f_{pE} & f_{pp}\end{pmatrix}
    =\begin{pmatrix} 2 & 0\\ 0 & -2-g''(p_\star)\end{pmatrix},
    \end{equation}
    with determinant $\det H=-4-2g''(p_\star)$. Therefore:
    \begin{itemize}
        \item if $\det H<0$ (i.e. $g''(p_\star)>-2$), the point is a saddle;
        \item if $\det H>0$ (i.e. $g''(p_\star)<-2$), the point is a local minimum;
        \item if $\det H=0$, the point is degenerate (higher-order analysis required).
    \end{itemize}

    \item The Gaussian curvature of the graph $z=f(E,p)$ reads
    \begin{equation}
    K(E,p)=\frac{2(-2-g''(p))}{\big(1+4E^2+(2p+g'(p))^2\big)^2}.
    \end{equation}
    Its sign is controlled by $g''(p)$:
    \begin{equation}
    K<0 \iff g''(p)>-2,\qquad
    K>0 \iff g''(p)<-2,\qquad
    K=0 \iff g''(p)=-2.
    \end{equation}
    Thus the loci $g''(p)=-2$ mark curvature sign changes (geometric ``phase transitions'').
\end{enumerate}
\end{proposition}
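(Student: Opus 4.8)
The plan is to treat the proposition as a direct specialization of the Monge‑patch graph formulas already established in the main text (equations \eqref{eq:KH-general}, \eqref{eq:app-metric}), so that each of the three claims collapses to a short computation. First I would record the first‑order partials of $f(E,p)=E^{2}-p^{2}-m^{2}-g(p)$, namely $f_{E}=2E$ and $f_{p}=-2p-g'(p)$. Setting both to zero, the equation $f_{E}=0$ forces $E=0$, and then $f_{p}=0$ becomes $2p+g'(p)=0$; since $g'(0)=0$ for the Planck‑suppressed corrections under consideration (in particular for $g(p)=\kappa\,p^{n}/M_{Pl}^{\,n-2}$ with $n>2$, where $g'(p)=n\kappa\,p^{n-1}/M_{Pl}^{\,n-2}$), $p=0$ is always a root, giving the SR saddle, and every additional critical point is a nonzero real root of $2p+g'(p)=0$. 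This is part (1).

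For part (2) I would compute the second‑order partials: $f_{EE}=2$, $f_{Ep}=\partial_{p}(2E)=0$, and $f_{pp}=\partial_{p}(-2p-g'(p))=-2-g''(p)$. Evaluating at a critical point $(0,p_{\star})$ yields the stated diagonal Hessian $\mathrm{diag}\!\big(2,\,-2-g''(p_{\star})\big)$ with $\det Hf=2\big(-2-g''(p_{\star})\big)=-4-2g''(p_{\star})$. The classification then follows from the standard second‑derivative test: one eigenvalue is $+2>0$, so when $\det Hf<0$ (equivalently $g''(p_{\star})>-2$) the other eigenvalue is negative and the point is a saddle; when $\det Hf>0$ (equivalently $g''(p_{\star})<-2$) both eigenvalues are positive and $(0,p_{\star})$ is a local minimum of $f$; when $\det Hf=0$ the test is inconclusive and the nature of the point is governed by higher‑order Taylor data, exactly as flagged in the statement.

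Part (3) is then immediate from the general graph formula $K=(f_{EE}f_{pp}-f_{Ep}^{2})/(1+f_{E}^{2}+f_{p}^{2})^{2}$: substituting the partials above, the numerator is $2\big(-2-g''(p)\big)$ and the denominator is $\big(1+4E^{2}+(2p+g'(p))^{2}\big)^{2}$, giving the claimed expression for $K(E,p)$. Since the denominator is strictly positive everywhere — it is the square of a quantity $\ge 1$ — the sign of $K$ equals the sign of $-2-g''(p)$, which produces the three equivalences and identifies the loci $g''(p)=-2$ as the curvature‑sign‑change set.

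There is no genuine obstacle in the argument: the proposition is a corollary of the Monge‑patch identities already derived, and the entire work is the bookkeeping of derivatives. The only points deserving explicit comment are the tacit normalization $g'(0)=0$ needed for $(0,0)$ to be a critical point, and the truly degenerate case $\det Hf=0$, where second‑order data do not determine the local type of the critical point and a finite‑determinacy analysis of $g$ near $p_{\star}$ would be required; I would state both caveats rather than suppress them.
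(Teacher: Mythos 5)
Your proof is correct and follows essentially the same route as the paper's own sketch: compute the first and second partials, read off the critical-point equations and the diagonal Hessian, apply the second-derivative test, and substitute into the Monge-patch formula for $K$. Your explicit caveat that the claim ``$(0,0)$ is always a critical point'' tacitly requires $g'(0)=0$ (true for the Planck-suppressed corrections $g(p)=\kappa\,p^{n}/M_{Pl}^{\,n-2}$ with $n>2$, but not for arbitrary $g\in C^{2}$) is a worthwhile refinement that the paper's proof passes over in silence.
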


\begin{proof}[Sketch of proof]
Direct computation gives $f_E=2E$, $f_p=-2p-g'(p)$, so critical points satisfy the above conditions.
The Hessian entries follow from $f_{EE}=2$, $f_{pp}=-2-g''(p)$, $f_{Ep}=0$,
hence $\det H=-4-2g''(p)$.
Finally, inserting $f_E,f_p,f_{EE},f_{pp}$ into the general formula for the Gaussian curvature of a graph $z=f(E,p)$,
\[
K=\frac{f_{EE}f_{pp}-f_{Ep}^2}{\big(1+f_E^2+f_p^2\big)^2},
\]
yields the stated expression. \qedhere
\end{proof}

\begin{acknowledgments}
The author thanks the anonymous referee for constructive criticism and comments that improved the quality of the manuscript.
\end{acknowledgments}

\end{document}